\newtheorem{prop}{Proposition}
\newtheorem{thm}{Theorem}
\theoremstyle{definition}
\def\BState{\State\hskip-\ALG@thistlm}
\begin{document}
	
	\title{Distributed Relay Selection in Presence of Dynamic Obstacles in Millimeter Wave D2D Communication
		\thanks{Durgesh Singh and Sasthi C. Ghosh are with the Advanced Computing \&  Microelectronics Unit, Indian Statistical Institute, Kolkata 700108, India. Email: durgesh.ccet@gmail.com, sasthi@isical.ac.in.}
		\thanks{Arpan Chattopadhyay is with the Department of Electrical Engineering, Indian Institute of Technology Delhi. Email: arpanc@ee.iitd.ac.in.}
		\thanks{This work was supported by the faculty seed grant and professional development allowance (PDA) of IIT Delhi.}
		\vspace{-10mm}}
	
	\author{
		Durgesh Singh, Arpan Chattopadhyay \& Sasthi C. Ghosh
		\vspace{-10mm}
	}

	\maketitle

	\begin{abstract}		
		Millimeter wave (mmWave) device to device (D2D) communication is highly susceptible to obstacles due to severe penetration losses and requires almost a line of sight (LOS) communication path. D2D channel condition is local to devices/user equipments (UEs) and hence is \textit{not} directly  visible to the base station (BS). Thus quality of the D2D channel needs to be propagated to BS by UEs which may incur some delay. Hence the solution provided by BS to UEs using this gathered channel information might become less useful to establish communication due to moving obstacles. These types of obstacles might not be known in advance and hence may cause unpredictable fluctuations to the D2D channel quality. Hence we seek to learn the D2D channels using the finite horizon partially observable Markov decision process (POMDP) framework to model the uncertainty in such kind of network environments with dynamic obstacles. The objective is to  minimize delay when channel quality deteriorates, by making UEs choose locally the best possible decision between i) to continue on the current relay link on which communication is taking place or ii) to switch to another good relay by exploring other possible UEs in its locality. We derive an optimal threshold policy which tells the UE to take appropriate decision locally. Later, we give a simplified and easy to implement stationary threshold policy which counts the number of successive acknowledgement failures, based on which UE make appropriate decision locally.  Through extensive simulation, we demonstrate that our approach  outperforms recent algorithms.
	\end{abstract}
	
	\IEEEpeerreviewmaketitle
	
	\section {Introduction}
	Device to device (D2D)  communication in 5G may bypass the base station (BS) to make devices or user equipments (UE) directly communicate with one another. It helps in reducing outage and reuse resources and to meet the increasing bandwidth requirements of devices.  Generally, D2D communication is studied for short distance communication which makes millimeter wave (mmWave) as the suitable candidate for it \cite{8014297}. Although mmWave has very high available bandwidth, but it suffers from very high propagation losses, which may be compensated using directional beams in multi-input multi-output (MIMO) antennas. However, the penetration loss is also very severe for mmWaves for  most of the outdoor materials \cite{6655403,7010536}. Hence, it renders mmWave  unsuitable in presence of such obstacles which may completely block the mmWave signal.  Selecting relays to avoid obstacles have been studied in various works \cite{6932503,p11_7504422,p8_7510705,p14_7450161,8292574,8340227}. Most of these work carry out analysis on static obstacles, the problem of choosing relays becomes more challenging where the obstacles are also moving. 
	
	The D2D channel condition might deteriorate rapidly due to obstacles and especially due to moving obstacles. This in turn causes link breakage and hence  packet loss and delay. The BS cannot sense the quality of D2D channel directly and thus such information needs to be communicated by  UEs to the BS. Using this gathered information, the BS may suggest source UE to continue communication via another relay. However this might incur some delays and by the time UEs get global solution provided by the BS, it may become less useful for UEs to communicate due to possible blockage by some dynamic obstacle. There can be other parameters local to a UE (like battery, channel availability, perceived throughput etc.) which may further creates problem in implementing the global solution \cite{8014287}. For mmWave communication, capturing dynamic obstacles is challenging task whose information may not be available apriori to the local nodes/UEs. Radars can be used to sense the obstacles \cite{8457255,DBLP:journals/corr/abs-1907-08500}, but it may be too expensive to place radars for detecting the moving obstacles. To deal with the uncertainty caused by the dynamic obstacles, a learning based approached using partially observed Markov decision process (POMDP) \cite{7080987,7996366,dp_book} is an appropriate choice. We may use past information of D2D channel quality to learn about it. In fact the dynamic  obstacle's presence  is also captured indirectly while learning the channel quality. 
	
	In this paper, we are modeling our problem of relay selection at each UE locally as a finite horizon POMDP to capture the uncertainty caused in a D2D channel due to moving obstacles. The state of D2D channel is not observable at the current time instant. It can only be observed after taking the decision to transmit packets to a chosen link in form of acknowledgements (ACKs). Information about dynamic obstacles are not known at BS a priori and it can only be learned using the feedback from UEs after communication has been established.  Even the ACKs can get lost due to presence of dynamic obstacles. A  given UE transmitting the data packets may initially choose the relay suggested by the BS. However at later time instants, the channel quality of the suggested link might deteriorate and may cause huge packet loss and delay. We use conditional probability of D2D channel quality given the ACKs history as the sufficient statistics which is also called the belief probability of the given link. We then derive an optimal policy which maps the  belief to a set of actions. An action chosen can be either to continue on the current link or to stop and explore other possibly available relay links. Later, by exploiting the derived policy structure, we obtain a stationary policy  which tells the UE whether to continue transmitting along the chosen relay link in case of several successive ACK failure. This helps UE to  stop sending the packets on the current link (after some successive ACKs failure) to avoid packet loss and mitigate delay. This method is compared with other state of art solutions i) based on a recent work which selects relays based on maximum throughput \cite{p14_7450161} and ii) received signal strength (RSS) based approach. We show in simulation that our proposed method outperforms other approaches. 
	
	Our contributions in this paper are summarized as follows:
	\begin{enumerate}
		\item We consider the effects of dynamic obstacles on D2D mmWave links, which is a new and challenging topic.
		\item We formulate the problem of relay selection as a POMDP, and show that the optimal policy checks whether a certain belief probability exceeds a threshold. This is a non-trivial result that required proof of several interesting intermediate results.
		\item Our optimal policy can be implemented locally at each node, thereby facilitating distributed implementation.
		\item The threshold policy is further reduced to counting the number of successive ACK failures, which is simple and easy to implement.
	\end{enumerate}
	
	The rest of the paper is organized as follows. 	System model is described in  section \ref{system_model}. The POMDP formulation is provided in  section \ref{problem_formulation}. Optimal policy structure is derived in section \ref{derivation_policy}. Numerical results are provided in section \ref{experiment}, followed by the conclusions  in section \ref{conclusion}. All proofs are provided in the appendix.
	
	\section{SYSTEM MODEL} \label{system_model}
	
	We are considering the device-tier of 5G D2D architecture mentioned in \cite{tehrani2014device}, where devices can communicate among themselves with or without the help from BS. The service region is divided into various \textit{zones} or \textit{grids} as shown in figure \ref{fig_service_region} with one BS. Each zone may have many UEs and is assumed to have atleast one D2D device which is ready to take part in D2D communication as a relay or source/destination node. 
	We define \textit{sending zone} as that zone where at least one UE wants to transmit data to an UE of some other zone. If $i$ is the sending zone then it may form connection to a UE of another zone $j\in \mathbb{U}^i$ , where $\mathbb{U}^i$ is the \textit{viable} relay zones of the zone $i$ which is given by the BS. A viable relaying zone of zone $i$ is one which is nearer to the zone containing the destination UE and is in the communication range of the zone $i$. When the UE in zone $i$ forms a connection with another UE of zone $j$, then it is termed as link $j$.
	Link is formed between UEs of two zones when they are in communication range of each other and the received signal strength is sufficient for the required data rate.
	Each UE can communicate with one another on mmWave channels using directional antennas. The received signal strength ($Q_{ij}$) on zone $j$ from zone $i$  is modeled as \cite{p14_7450161}:
	\begin{equation}
		Q_{ij}=\mu_{ij} \cdot P_i \cdot G_t\cdot G_r \cdot PL_{ij}
	\end{equation}
	where, $\mu_{ij}$ is the shadowing random variable, $P_i$ is the transmit power of UE $i$, $G_t$ \& $G_r$ are transmit and receive beam-forming gains respectively. $PL_{ij}$ is the distance dependent path loss function.
	
	Time is discretized  as $(nN+l)\delta$ as shown in figure \ref{fig_time_slots}, where $n$ belongs to set of nonnegative integers, $l$ takes integer values in $[0,N-1]$, $\delta$ is the smaller discretized time slot when the UEs transmit packets locally. It is assumed that $\delta$ (for each $l\in[0,N-1]$) is large enough to send one packet of size $L$ bytes. Here, $N$ is the number of time slots (of $\delta$ duration) between two consecutive global decisions by the BS. Global decision by BS is made at time when $nN+l$ is divisible by $N$. At this time instant BS takes the channel state information from all UEs in the service region and gives the decision of best relaying UE of a given zone for a given source UE. Hence, in between two consecutive time instants when BS can make global decision, a UE can send at-most $N$ packets of size $L$ to another UE. Note that at time $l=0$, the UE chooses the relay link suggested by the BS and at time  $l\in \{1,2,\cdots,N-1\}$, BS has no control over the UEs.  At global time instants, BS sends two types of information to UEs, i) the best relay UE (or node) for a given source UE and ii) viable relaying zones $\mathbb{U}^i$ for given source zone $i$, hence the zone $i$ may choose an appropriate zone for relaying data from the set $\mathbb{U}^i$.
	
	There are static and dynamic obstacles in the service region. There is \textit{no} facility like radars (to track them) available at BS.  The behavior of dynamic obstacles  are not known a priori and need to be learned from the received acknowledgement of sent packets in an on-line fashion. Since mmWaves are highly susceptible to obstacles and suffer from severe penetration loss, we assume that even a single moving or static obstacle may break an already established D2D link and can cause packet loss. It is assumed that the mobility of UEs  in a zone $i$ for $N\delta$ duration do not bring them outside the zone and this do not cause link outage. Hence the only factors responsible for link breakage and packet loss are obstacles and channel condition due to fading.
	
	
	The source/relay node takes local decision when the current link quality is not good enough and the node locally explores and switches to another one-hop node by incurring penalty. This exploration is done for the zone's set $\mathbb{U}^i$ given by BS to find out the best relaying zone for that time instant. Note that the UE is using directional mmWave antennas for exploring the neighbors and this time is assumed to cause some significant delay with respect to the duration $\delta$. Here both exploring and packet loss is assumed to consume one time unit $\delta$. It is assumed that the relay link is established within this exploration time.
	
	\begin{figure}[h!]
		\centering
		\includegraphics[width=0.35\textwidth]{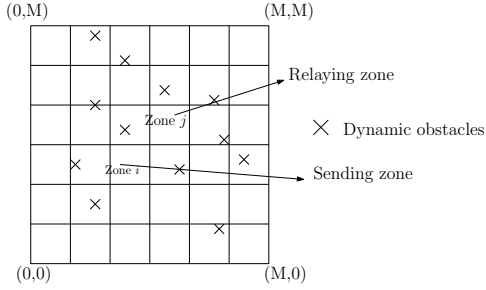}
		\caption{Service region divided into zones along with dynamic obstacles.}
		\label{fig_service_region}
	\end{figure}

	\begin{figure}[h!]
		\centering
		\includegraphics[width=0.35\textwidth]{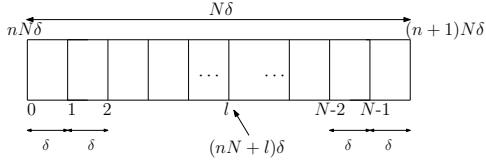}
		\caption{Discretized time slots with the smallest slot duration of $\delta$.}
		\label{fig_time_slots}
	\end{figure}
	
	\section{Problem Formulation as POMDP} \label{problem_formulation}
	
	Zone $i$ is the sending zone which contains at least one UE which needs to transmit data to an UE of some other zone $j\in \mathbb{U}^i$. This is termed as a link $j$ for the given sending zone  $i$. Hence zone $j$ may contain a relaying UE or the destination UE. Global decision for the best relay is given by the BS at the time instant $nN\delta$ to relay data packet till $(n+1)N\delta$ time instant. There are 
	both static and dynamic obstacles present in the environment which causes uncertainty in channel quality. Also, the  BS has no direct knowledge of the D2D channel conditions. This might deteriorate  the quality of relay link given by the BS. Which  might cause packet loss and delay in data transmission. We need to control this packet loss for $N\delta$ duration. However, BS do not have control over the data packets sent between time instants  $nN\delta$ and $(n+1)N\delta$. Hence the node locally needs to select for the best relay zone from $\mathbb{U}^i$ given the uncertainty of D2D channels and the current relay link has become bad. Since behavior  of channel condition is uncertain and unknown before actually establishing connection and transferring the packets, hence we will formulate this problem as a finite horizon POMDP \cite{dp_book}. 
	
	For the  duration between instants  $nN\delta$ and $(n+1)N\delta$, the time instants are referred as $l\in[0,N-1]$. Here we want to derive a decision criterion  to choose appropriate action (continue with the current relaying zone or explore and switch to some other zone) which lead the system to good state. Good state is defined by the minimum packet loss (in turn delay) considering all the required penalty costs. Hence our objective is to minimize the delay cost incurred due to packet loss while choosing appropriate relays and keeping the exploring and switching cost as low as possible. 
	For our POMDP problem, we will  describe state, action, observation, probabilistic structure of the problem, respective costs and cost function in upcoming paragraphs.
	
	For a given sending zone $i$, the state for all its possible relay  links $j\in U^i$ is written as $x_{l}^j \in\{0,1\}$. This signifies if relay link $j$ is in  good ($G$) or bad ($\overline{G}$) state for values $x_{l}^j =1$ and $x_{l}^j =0$ respectively. The relay link is in good state when the channel quality is as required and packet is transmitted successfully without getting blocked from obstacles, whereas in bad state the channel quality drops and hence packet loss occurs. The action set $\mathbb{A}$ is defined as \{explore \& switch to another link ($a_l^j=0$), transmit on current link (zone) ($a_l^j=1$)\}. The local node in zone $i$, makes observation at each smaller time instant $\delta$ after the packet is sent. This observation is in the form of ACK test which is denoted as $z_{l}^j\in\{0,1\}$. Here, $z_{l}^j=0$ represents that the acknowledgement is not received for link $j$ because link is bad which causes packet loss and similarly $z_{l}^j=1$ represents that the acknowledgement is  received and link is good and packet is transmitted successfully. We also represent $A$ and $\overline{A}$ as the ACK received or not ($z_l^j=1$ or $z_l^j=0$) respectively. Since ACK are quick and are available in negligible amount of time, for state $x_l^j=1$ and action $a_l^j=1$,  the observation (ACK) is $z_l^j$. The ACK may also be uncertain due to the unpredictable behavior of the given channel under consideration.

	The probabilistic structure of the observation assumed here is shown in figure \ref{fig_prob} and written as:
	\[P(z_{l}^j=1|x_{l}^j=1)=k; P(z_{l}^j=0|x_{l}^j=1)=1-k\]
	\[P(z_{l}^j=1|x_{l}^j=0)=0; P(z_{l}^j=0|x_{l}^j=0)=1\]
	If the system is in bad state with $x_{l}^j=0$ at time $l$, then the probability of obtaining good observation is zero ($P(z_{l}^j=1|x_{l}^j=0)=0$)  which is intuitive and obvious. The probabilistic structure assumed for the system state transition is given as:	
	\[P(x_{l+1}^j=1|x_{l}^j=1)=q; 	P(x_{l+1}^j=0|x_{l}^j=1)=1-q\]
	\[P(x_{l+1}^j=1|x_{l}^j=0)=s; 	P(x_{l+1}^j=0|x_{l}^j=0)=1-s\]
	Here $q$, $s$ and $k$ are respectively the probabilities that link is still good, bad link becomes good and the ACK is received successfully when the link is in good state. It is intuitive and legitimate to assume that $q>s$. The transition probability $1-q$ indicates that the good link becomes bad due to obstacles or signal fading. Similarly $(1-s)$ is the probability that bad link is still bad  (for  obstacles it indicates either obstacle is large in length or moving slowly and effecting the link for longer period).
	\begin{figure}[h!]
		\centering
		\includegraphics[width=0.30\textwidth]{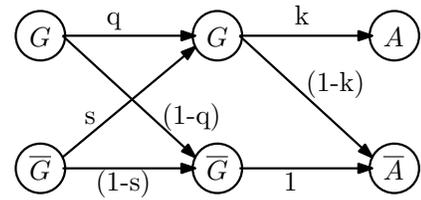}
		\caption{Probabilistic structure of the problem at a node locally.}
		\label{fig_prob}
	\end{figure}	
	For a given relaying zone $j$, let 	$I_l^j=(z_0^j,z_1^j,\cdots,z_l^j)$ denote the information vector available locally to the zone $i$ till smaller time instant $l$. Let us define $b_l$ as the conditional state distribution acting as the sufficient statistics or belief \cite{dp_book}(chapter 5) locally for the given relaying link $j$ as:
	\begin{equation}
		b_l^j=P(x_l^j=1|I_l^j)
	\end{equation}
	This equation signifies the probability that the relaying link is in good state given the previous history information. The estimator function of the local system is given as:
	\begin{equation}
		b_{l+1}^j=\Phi(b_l^j,z_{l+1}^j).
	\end{equation}
	Using Baye's rule we get,
	\begin{equation} \label{update_rule}
		b_{l+1}^j = \begin{cases}
			1, & \text{if } z_{l+1}^j=1\\
			\frac{[q b_l^j+ s  (1-b_l^j)] (1-k)}{[q b_l^j+s  (1-b_l^j)] (1-k) + [(1-q) b_l^j+ (1-s) (1-b_l^j)]} , & \text{if }z_{l+1}^j=0
		\end{cases}	
	\end{equation}
	The cost structure is defined as follows: when packet loss occurs then $C$ is the penalty (in terms of delay) incurred to overcome it. Since after a packet loss we may need to explore, hence, this is the same cost for exploration. When there is no packet loss then the cost incurred is $0$. Cost of testing for ACK is negligible and hence $0$. Here our objective is to derive a decision rule  to choose appropriate action (continue with the current relaying zone or explore and switch to some other zone) which leads the system to good state and in turn causes minimum packet loss  considering all the required costs. The expected cost is formulated as a dynamic program. At the end of the last period i.e., $(N-1)th$ period, the expected cost is defined as: 
	\begin{equation}
		J_{N-1}(b_{N-1}^j)=(1-kb_{N-1}^j) C. \label{eq_j1}
	\end{equation}
	Note that the for the last time instant $N-1$, packet loss can be due to two types of events: i) due to the link being in bad state and causing packet loss and  ii) when the link is in good state and the ACK is not received due to bad channel quality. For the time instant $l=N-2$, we have,
	\begin{equation}
		J_{N-2}(b_{N-2}^j)=\min\{C,\gamma_{N-2}+ E[J_{N-1}(b_{N-1}^j)] \label{eq_j2}
	\end{equation}
	where, $\gamma_{N-2}$ is the expected penalty paid due to packet loss at time $l=N-2$ which is $(1-k b_{N-2}^j) C$. The first term in minimization expression denotes the exploring \& switching cost and the second term denotes the cost for continuing in the current relay link. Similarly we can write the dynamic program for the general expression for each $l$ as:
	\begin{equation}
		J_{l}(b_{l}^j)=\min\{C,\gamma_l+E[J_{l+1}(\Phi(b_l^j,z_{l+1}^j))]\} \label{eq_j3}
	\end{equation}
	where, $\gamma_{l}$ is the expected penalty paid due to packet loss at time instant $l$ which is $(1-k b_{l}^j) C$. After solving this DP we will get a criterion, based on which the local decision can be made to switch the link or to remain on that link. Hence for a given relay zone $j$ at time instant $l$, we want to minimize the cost $J_l(b_l)$. The analysis of  this criterion is given in the next section where we derive a policy which maps the belief into action. The policy and hence the respective action taken optimize our objective function.
	
	\section{Derivation of the Optimal Policy} \label{derivation_policy}
	\subsection{Properties of $J_l(b)$}
	At the end of the $(N-1)th$ period, the expected cost is as mentioned in equation \eqref{eq_j1}. For the general expression for the time instant $l$ as mentioned in equation \eqref{eq_j3}, we can write it equivalently as:
	\begin{equation} 
		J_{l}(b_{l}^j)=\min\{C,A_l^j(b_l^j)\} \label{eq_dp}
	\end{equation}
	where, 
	\begin{multline}\label{eq_A_l}
		A_l^j(b_l^j)=\gamma_l+P(z_{l+1}^j=1|b_l^j) J_{l+1}(\Phi(b_l^j,1)) +\\ P(z_{l+1}^j=0|b_l^j)  J_{l+1}(\Phi(b_l^j,0))
	\end{multline}	
	For notation simplicity we will now remove the superscript $j$ from each of the respective notations, e.g., we will write $b_l^j$ as $b_l$ and $A_l^j()$ as $A_l()$. Hence $A_l^j(b_l^j)$ can now be denoted as $A_l(b_l)$. 
	
	We can reduce  $A_l(b_l)$ in equation \eqref{eq_A_l} to:
	\begin{multline} \label{equation_A_R}
		A_l(b_l)=(1-k b_{l}) C + (b_l q +(1-b_l) s)k  J_{l+1}(1)\\ + (1-(b_l q +(1-b_l) s)k) J_{l+1}(\Phi(b_l,0))
	\end{multline}
	
	As an example we will use this expansion to simplify equation \eqref{eq_j2} as:
	\begin{multline}
		J_{N-2}(b_{N-2}^j)=\min\{C,(1-k b_{N-2}^j) C+\\(1-(b_{N-2} q+(1-b_{N-2}) s) k) C\}.\label{eq_j2_expand}
	\end{multline}
	At the end of $(N-2)th$ period as shown in above  equation \eqref{eq_j2_expand}, the local node has calculated $b_{N-2}^j$ that the relay link $i$ is still the good node or not  and further decides whether to continue on the already selected relay link $j$ or needs to explore and switch to another relay node and incur extra cost $C$. In equation \eqref{eq_j2_expand}, 
	$(1-k b_{N-2}^j) C$ indicates the expected penalty incurred due to packet loss and $(1-(b_{N-2} q+(1-b_{N-2}) s) k) C$ indicates the expected cost to be incurred at the upcoming time instant $l=N-1$.
	
	We now show that  functions	$A_l(b_l)$ are piece-wise linear for each $l$ in proposition \ref{prop_1}. 	
	\begin{prop} \label{prop_1}
		$A_l(b_l)$ is piece-wise linear and concave in $b_l$ for each $l$.
	\end{prop}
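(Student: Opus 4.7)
The plan is to prove this by backward induction on $l$, moving from $l = N-2$ down to $l = 0$. For the base case $l = N-2$, the expansion \eqref{eq_j2_expand} already shows that $A_{N-2}(b)$ is an affine function of $b$, which is trivially piecewise linear and concave.

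For the inductive step, suppose $A_{l+1}(b)$ is piecewise linear and concave. Since any piecewise linear concave function on $[0,1]$ admits a representation as the pointwise minimum of finitely many affine functions, the pointwise minimum $J_{l+1}(b) = \min\{C, A_{l+1}(b)\}$ is also piecewise linear and concave (the affine pieces of $A_{l+1}$ together with the constant $C$ form the new family). So fix a representation $J_{l+1}(b') = \min_i(m_i b' + c_i)$ over a finite index set.

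Now inspect \eqref{equation_A_R}. The terms $(1-kb)C$ and $[bq + (1-b)s]k\,J_{l+1}(1)$ are manifestly affine in $b$. The only nontrivial term is $[1 - (bq + (1-b)s)k]\,J_{l+1}(\Phi(b,0))$. Let $D(b) := 1 - [bq + (1-b)s]k$. By a direct algebraic simplification of the denominator in \eqref{update_rule}, $D(b)$ coincides with the normalizing constant in the belief update when $z_{l+1} = 0$, and is affine and nonnegative on $[0,1]$. Crucially, $D(b)\,\Phi(b,0) = (1-k)[bq + (1-b)s]$, which is itself affine in $b$. Therefore, substituting the representation of $J_{l+1}$,
\[
D(b)\,J_{l+1}(\Phi(b,0)) \;=\; \min_i \bigl(m_i D(b)\Phi(b,0) + c_i D(b)\bigr),
\]
and each expression inside the minimum is affine in $b$. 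This makes $D(b)\,J_{l+1}(\Phi(b,0))$ a minimum of affine functions, hence piecewise linear and concave. Adding the earlier affine terms preserves this property, so $A_l(b)$ is piecewise linear and concave, completing the induction.

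The main obstacle is the nonlinear dependence of $\Phi(b,0)$ on $b$: it is a ratio of affine functions, so a priori $J_{l+1}(\Phi(b,0))$ need not be piecewise linear in $b$. The key observation that unlocks the induction is that the multiplicative prefactor $D(b)$ appearing in \eqref{equation_A_R} is exactly the denominator of $\Phi(b,0)$, and so cancels it, reducing $D(b)\Phi(b,0)$ to an affine expression. This cancellation — specific to our two-state, $z=0$ observation channel with $P(z=1\mid x=0)=0$ — is the analogue of the classical Smallwood--Sondik argument for POMDP value functions, adapted here to our cost-minimization setting.
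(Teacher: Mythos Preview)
Your proof is correct and follows essentially the same approach as the paper: both rely on the key cancellation that the prefactor $1-(bq+(1-b)s)k$ is exactly the denominator of $\Phi(b,0)$, so that $D(b)\Phi(b,0)$ is affine and the piecewise-linear concave structure propagates through the backward recursion. The only difference is organizational---the paper first proves $J_l$ is piecewise linear and concave by induction and then re-derives the same for $A_l$, whereas you induct directly on $A_l$ (recovering $J_{l+1}=\min\{C,A_{l+1}\}$ from the hypothesis), which is slightly more streamlined but not substantively different.
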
 	
	\begin{proof}
		See appendix.  
	\end{proof}
	\begin{prop} \label{prop_NEW}
		$\forall b_{l-1},b_l,b_{l+1} \in [0,1]$, 	\[A_{l-1}(b_{l-1}) \ge A_l(b_l) \ge A_{l+1}(b_{l+1})\]
		Also, $\forall b_l,b'_l$, $0 \le b_l < b'_l\le 1$, $A_l(b_l) \ge A_l(b'_l)$. 
	\end{prop}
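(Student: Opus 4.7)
The plan is to establish both assertions together by backward induction on $l$, carrying two auxiliary monotonicity invariants: $J_l(b)$ is non-increasing in $b$, and $J_l(b) \ge J_{l+1}(b)$ for every $b$. Both assertions then fall out of the reduced form \eqref{equation_A_R} for $A_l(b)$ once these invariants are in place; the chain $A_{l-1}(b_{l-1}) \ge A_l(b_l) \ge A_{l+1}(b_{l+1})$ is read pointwise at a common argument, the only reading consistent with the DP structure.

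\textbf{Base case.} At $l=N-1$, $J_{N-1}(b)=(1-kb)C$ is affine and decreasing in $b$. To seed the monotonicity of $J$ in $l$, substitute into \eqref{equation_A_R} and observe that $A_{N-2}(b)$ equals $(1-kb)C$ plus a non-negative combination of $J_{N-1}(1)$ and $J_{N-1}(\Phi(b,0))$; hence $A_{N-2}(b) \ge J_{N-1}(b)$, and taking $\min$ with $C$ (using $J_{N-1}\le C$) yields $J_{N-2}(b) \ge J_{N-1}(b)$.

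\textbf{Inductive step for monotonicity in $b$.} Writing $f(b)=s+(q-s)b$, a direct calculation from \eqref{update_rule} gives $\Phi(b,0) = f(b)(1-k)/(1-kf(b))$, which is increasing in $f(b)$ and hence in $b$ since $q>s$. Assume both invariants hold at $l+1$. For $b<b'$ regroup
\begin{multline*}
A_l(b) - A_l(b') = k(b'-b)\bigl[C - (q-s)J_{l+1}(1)\bigr] \\
+ \bigl[(1-kf(b))J_{l+1}(\Phi(b,0)) - (1-kf(b'))J_{l+1}(\Phi(b',0))\bigr].
\end{multline*}
The first bracket is non-negative because $J_{l+1}(1)\le C$ (immediate from the $\min$ in \eqref{eq_dp}); in the second, $1-kf(b)\ge 1-kf(b')\ge 0$ and $J_{l+1}(\Phi(b,0))\ge J_{l+1}(\Phi(b',0))$ by the hypothesis, so both factors favor the ordering. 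This yields the second assertion, and $J_l=\min\{C,A_l\}$ inherits monotonicity in $b$.

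\textbf{Propagating $J_l\ge J_{l+1}$ and closing the induction.} Using $J_{l+1}\ge J_{l+2}$ from the previous iteration together with \eqref{equation_A_R},
\begin{multline*}
A_l(b) - A_{l+1}(b) = kf(b)\bigl[J_{l+1}(1)-J_{l+2}(1)\bigr] \\
+ (1-kf(b))\bigl[J_{l+1}(\Phi(b,0))-J_{l+2}(\Phi(b,0))\bigr],
\end{multline*}
both brackets non-negative, so $A_l(b)\ge A_{l+1}(b)$ pointwise (the first assertion) and then $J_l\ge J_{l+1}$ via the $\min$, which closes the induction. The crux is the monotonicity-in-$b$ step: the $J_{l+1}(\Phi(b,0))$ summand depends on $b$ through both the coefficient $(1-kf(b))$ and the argument $\Phi(b,0)$, so a naive term-by-term decomposition fails; the regrouping above, which absorbs a linear-in-$b$ drift into the factor $C-(q-s)J_{l+1}(1)\ge 0$ and pairs the remaining changes so that coefficient and value move in the same monotone direction, is what makes the estimate go through.
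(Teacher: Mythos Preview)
Your argument is correct and follows essentially the same route as the paper: both establish $J_l \ge J_{l+1}$ by backward induction and then read off $A_l \ge A_{l+1}$ pointwise from \eqref{equation_A_R}; for monotonicity in $b$, both hinge on the same two ingredients---$\Phi(b,0)$ increasing in $b$ and the bound $(q-s)J_{l+1}(1)\le C$---and your explicit difference decomposition $A_l(b)-A_l(b')$ is just a cleaner repackaging of the paper's term-by-term rearrangement of $A_l(b)$. Your remark that the first assertion must be read at a common argument matches the paper's own interpretation.
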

	\begin{proof}
		See appendix.    
	\end{proof}

	\subsection{Policy Structure}
	The structure of an optimal policy for our POMDP problem is provided in the following theorem.
	\begin{thm}\label{thm_1}
		The optimal policy for our POMDP problem is a threshold policy. At any time instant $l \in \{0,1,\cdots,N-1\}$, the optimal action is to continue transmission on the current relay link if $b_l \geq \alpha_l$, and explore and switch to another better relay link if $b_l < \alpha_l$. Also, the threshold $\alpha_l \in [0,1]$ is non-increasing in $l$.
	\end{thm}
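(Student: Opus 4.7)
The plan is to read off the threshold structure directly from the Bellman recursion $J_l(b_l)=\min\{C,A_l(b_l)\}$ together with the two structural facts already in hand: by Proposition \ref{prop_1} each $A_l$ is piecewise linear (hence continuous) in $b_l$, and by Proposition \ref{prop_NEW} each $A_l$ is non-increasing in $b_l$ and moreover $A_l(b)\ge A_{l+1}(b')$ for all $b,b'\in[0,1]$. The optimal action at stage $l$ is to continue on the current link exactly when $A_l(b_l)\le C$ and to explore and switch otherwise, because $C$ is the fixed cost of the switching action while $A_l(b_l)$ is the expected cost-to-go of the continuation action.

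First, I would establish existence of the threshold. Since $A_l(\cdot)$ is continuous and non-increasing on $[0,1]$, the sublevel set $\{b\in[0,1]:A_l(b)\le C\}$ is either empty, all of $[0,1]$, or a closed upper interval. Define
\[
\alpha_l \;:=\; \inf\{b\in[0,1]:A_l(b)\le C\},
\]
with the convention $\alpha_l=1$ if the set is empty and $\alpha_l=0$ if the set is the full interval. Then $\{b:A_l(b)\le C\}=[\alpha_l,1]$, which says precisely that the optimal action is to continue when $b_l\ge\alpha_l$ and to switch when $b_l<\alpha_l$.

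Second, I would prove the monotonicity $\alpha_l\ge\alpha_{l+1}$ by using the time-wise inequality in Proposition \ref{prop_NEW}. Applied with $b_l=b_{l+1}=b$ it gives $A_l(b)\ge A_{l+1}(b)$ for every $b\in[0,1]$. Therefore $A_l(b)\le C$ implies $A_{l+1}(b)\le C$, which yields the inclusion
\[
[\alpha_l,1]\;=\;\{b:A_l(b)\le C\}\;\subseteq\;\{b:A_{l+1}(b)\le C\}\;=\;[\alpha_{l+1},1],
\]
and this inclusion forces $\alpha_{l+1}\le\alpha_l$. An easy sanity check at the boundary is $\alpha_{N-1}$: from \eqref{eq_j1}, $A_{N-1}$ reduces to the one-step cost $(1-kb)C$, so the threshold at the terminal stage is $\alpha_{N-1}=(1-1/k)/\ldots$ in the appropriate closed form, confirming the structure and giving a concrete anchor for the backward induction.

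The real work of the theorem is already bundled into the two supporting propositions; once those are available, the argument above is essentially a set-theoretic observation about sublevel sets of monotone continuous functions. The one subtle point I would watch carefully is degenerate cases where $A_l$ is constant on a sub-interval (possible because $A_l$ is only piecewise linear, not strictly monotone): continuity guarantees that the threshold $\alpha_l$ is attained and that the policy is well-defined on the boundary, so the tie-breaking choice of "continue" at $b_l=\alpha_l$ is optimal. I therefore expect no genuine obstacle in this step, and the main technical burden of the overall result to lie in having proved Proposition \ref{prop_NEW}, since the uniform time-monotonicity it supplies is exactly what drives the temporal ordering of the thresholds.
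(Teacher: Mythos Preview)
Your proposal is correct and follows essentially the same route as the paper: both arguments read the threshold directly off the Bellman recursion $J_l(b)=\min\{C,A_l(b)\}$ using Propositions~\ref{prop_1} and~\ref{prop_NEW}, and both obtain the temporal ordering of thresholds from the time-monotonicity $A_l(b)\ge A_{l+1}(b)$. Your sublevel-set formulation is in fact slightly cleaner than the paper's claim that $y=C$ and $y=A_l(b)$ ``intersect at a single point,'' since you explicitly accommodate the degenerate flat-segment cases; one small caveat is that your terminal ``sanity check'' refers to $A_{N-1}$, which is not defined in the recursion (the continuation cost $A_l$ only appears for $l\le N-2$), so that aside should be dropped or rephrased.
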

	\begin{proof}
		See appendix.  
	\end{proof}
	
	As  $l \rightarrow \infty$, $\alpha_l$ converges to some scalar $\overline{\alpha}$, since a decreasing sequence which is bounded below always converges. Hence, for very large horizon length $N$, the optimal policy can be approximated by a stationary threshold policy with a time-invariant threshold $\overline{\alpha}$. 
	
	Note that, if $z_l=1$, then $b_l=1$. Hence, without loss of generality, let us assume that $b_0=1$. If $z_0=0$, then $b_1=\Phi(b_0=1, z_0=0)=\frac{q-qk}{1-qk} < 1=b_0$. Now, it is easy to check that $\Phi(b,0)$ is a 
	strictly increasing function in $b$. Hence, $b_2=\Phi(b_1,0) < \Phi(b_0,0)=b_1$. Proceeding in this way, we can show that $b_l$ strictly decreases with $l$ whenever we observe several successive ACK failures.  We can define recursively a probability $\pi_m$ of getting $m$ successive ACK failure as: $\pi_1=\Phi(1,\overline{A})$,  $\pi_2=\Phi(\pi_1,\overline{A})$, $\cdots$.  Let $r$ be the smallest integer such that $\pi_r \le \overline{\alpha}$.  We can further simplify the stationary threshold policy as follows.
	
	{\bf Simplified stationary threshold policy: }
	{\em Let $r$ be the smallest integer such that $\pi_r \le \overline{\alpha}$. If there are $r$ successive ACK failures, explore and switch to another better relay link, else continue transmission on the current relay link.}
	
	%
	%

	\section{Simulation and Results} \label{experiment}	
	\subsection{Simulation Environment}
	We have divided the service region of $100~m \times 100~m$ square area into zones in form of grids each of dimension $10~m\times10~m$. Each zone have sufficient number of UEs which is enough to form a D2D link with UEs of other zones.  In the experiment,  $\delta$ is taken to be $100~ms$. Nodes are using directional transmitter and receiver antennas for $60~GHz$ frequency with $G_r=G_t=6~dB$ and we are considering a scenario where line of sight path loss exponent is $2.5$ and zero mean  log-normal shadowing random variable with standard deviation $3.5$ \cite{7974772,7109864}. Thermal noise density is $-174~dBm/Hz$ and devices are using $24~dBm$ transmit power.  Capacity of each link $(i,j)$ is $B\log_2(1+S_{ij})~bits/sec$, where $B=20~MHz$ \cite{al2014path} is bandwidth and $S_{ij}$ is the received signal to noise ratio. We are assuming  fixed packet  length of $65535~bytes$. There are maximum $16$ static and $D$ dynamic obstacles present in the environment, where $D\in\{0,16,32,48,64\}$. Static obstacles are placed uniformly in the service region. Each static obstacle is assumed to be of the dimension of a grid. Hence all communication going via that grid where there is an static obstacle will get blocked. Each dynamic obstacles is moving randomly and independently of each other and following a simple blockage model such that with probability $0.5$ it will block a given link otherwise it will not block the link.  We are assuming that a given zone $i$ can make connection with another zone out of given at-most $16$ neighboring zones surrounding it i.e $\mathbb{U}^i \le 16$.  We assume a single source-destination pair for simplicity and all other devices in a given zone may act as relay.
	
	We 	have written our own C++ custom code and run them on a \texttt{GNU} $4.8$ compiler on Intel core $i7$ machine. We run our experiments for around 10000 runs and take average results per run and per hop for the packet loss per packet delivered and end to end (E2E) delay per packet. Here packet loss per packet delivered is defined as the ratio of packet loss and  successfully delivered packets to the destination. E2E delay is the total time (in seconds) to send a packet successfully from source UE  to the destination UE ignoring the queuing delays. We are analyzing the results on these parameters with respect to number of dynamic obstacles $D$. We also analyzed the E2E delay on varying number of static obstacles. We are comparing the results of our proposed approach with metrics: 1) which selects relay link based on received signal strength (\texttt{RSS Based}) and 2) an approach which selects relay link based on maximum overall throughput (\texttt{ThroughPut Based})  \cite{p14_7450161}.
	
	\subsection{Simulation Results \& Analysis}	
	In figure \ref{fig-1}, we are comparing the results of packet loss per packet delivered successfully over the number of dynamic obstacles. We can see that as the number of dynamic obstacles is increased the packet loss  per packet delivered successfully is also increased. The reason is obvious due to the fact that as the number of dynamic obstacles increases, the chance of getting blocked also increases and hence the packet loss. Our proposed method outperforms other algorithms due to the fact that it learns the quality of the D2D links based on ACK and changes to another better relay when the quality of current D2D link deteriorates.
	
	In figure \ref{fig-2}, we are capturing the results of E2E delay per packet over the number of dynamic obstacles. Here also we can see that as the number of dynamic obstacles is increased the delay also increases. This is due to the fact that as the number of dynamic obstacles increases, packet loss increases and hence it causes extra delay. Our proposed method outperforms other algorithms due to the same reason as mentioned in above paragraph.
	
	Similarly, in figure \ref{fig-3}, we are capturing the results of E2E delay per packet over the number of static obstacles keeping no dynamic obstacles.  Here also we can see that as the number of static obstacles is increased the delay also increases since packet loss increases due to blockage from static obstacles too.
	
	\subsection{Discussions} 
	The proposed method can be run on each UE locally to choose an optimal relay at time instants when there is no control of the BS and the D2D channel quality becomes bad. It is evident from the results that as the number of obstacles increases, the packet loss increases rapidly. Since with higher number of the obstacles, the chance of a link to get blocked gets increased. Also the expected number of links getting blocked also increases.  However, it might be the case that the number of dynamic obstacles are so large that we may not find any D2D link which is free from the blockage due to obstacles. In this case our algorithm will not give any better links due to the reason that it will not find any link which satisfies the derived threshold policy $\overline{\alpha}$.  In such cases with very dense dynamic obstacles, empirically the packet loss is very negligible but the packet delivered successfully  is also very less and hence delay also might increase. In these scenarios, one appropriate solution would be to opt for the relays which are kept at some height above ground or to chose the transmission over traditional  micrometer waves of the BS which is less susceptible to the blockage by obstacles. 
	
	
	\begin{figure*}[h!]
		\centering
		\begin{minipage}[b]{.3\textwidth}
			\includegraphics[width=1.1\textwidth]{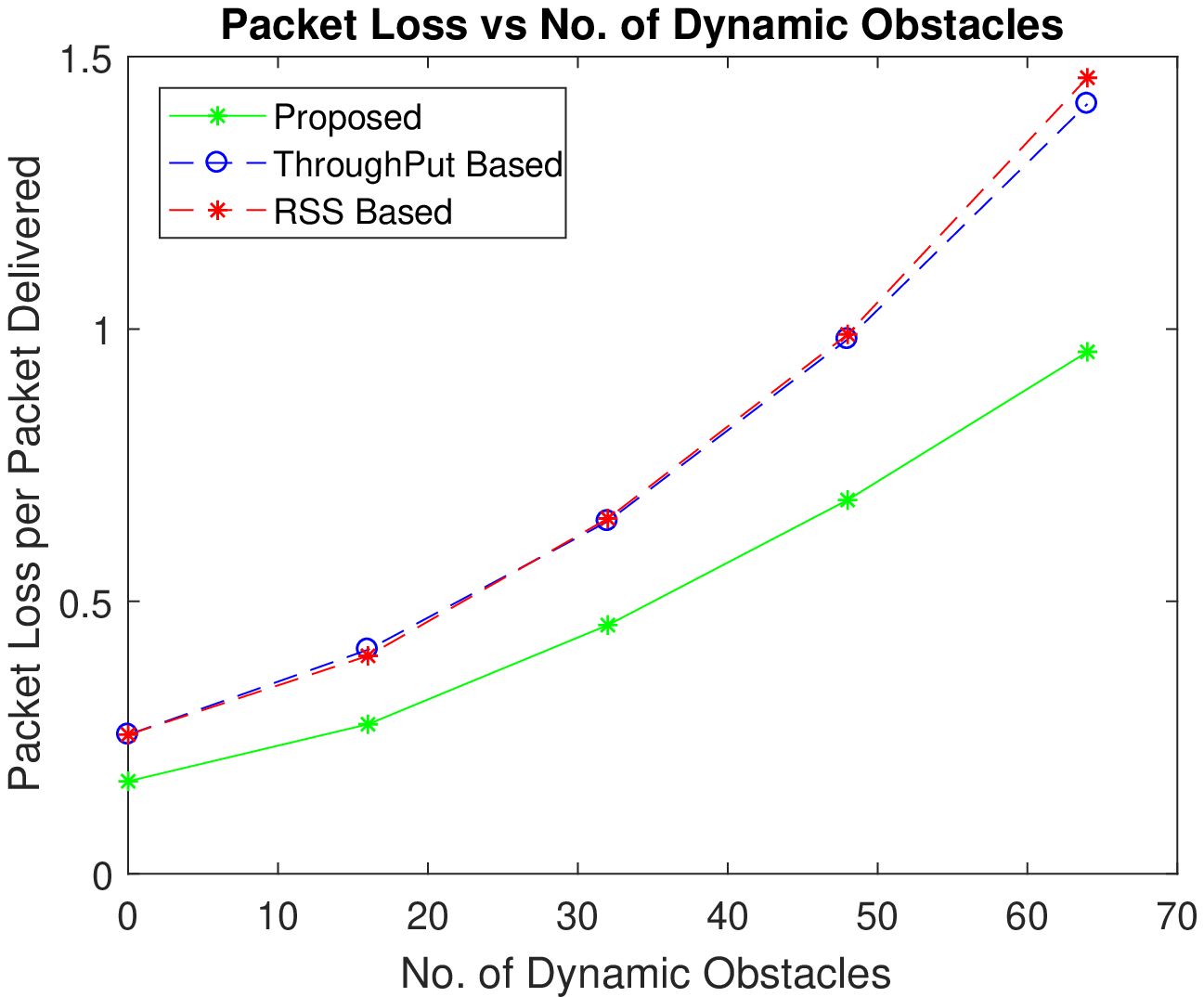}
			\caption{Packet loss per packet delivered vs No. of dynamic obstacles }
			\label{fig-1}
		\end{minipage}\qquad
		\begin{minipage}[b]{.3\textwidth}
			\includegraphics[width=1.1\textwidth]{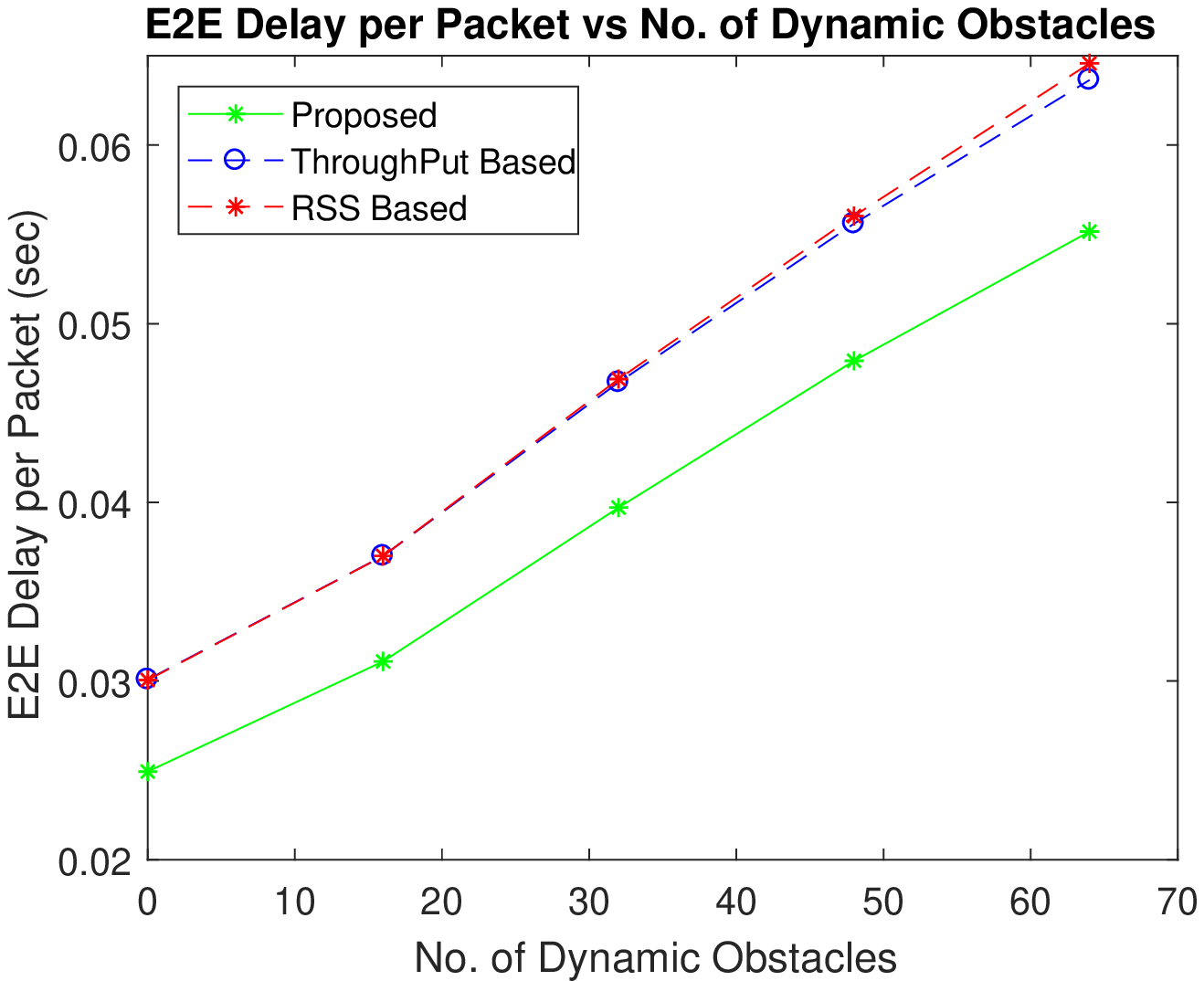}
			\caption{E2E delay per packet (in seconds) vs  No. of dynamic obstacles }
			\label{fig-2}
		\end{minipage}\qquad
		\begin{minipage}[b]{.3\textwidth}
			\includegraphics[width=1.1\textwidth]{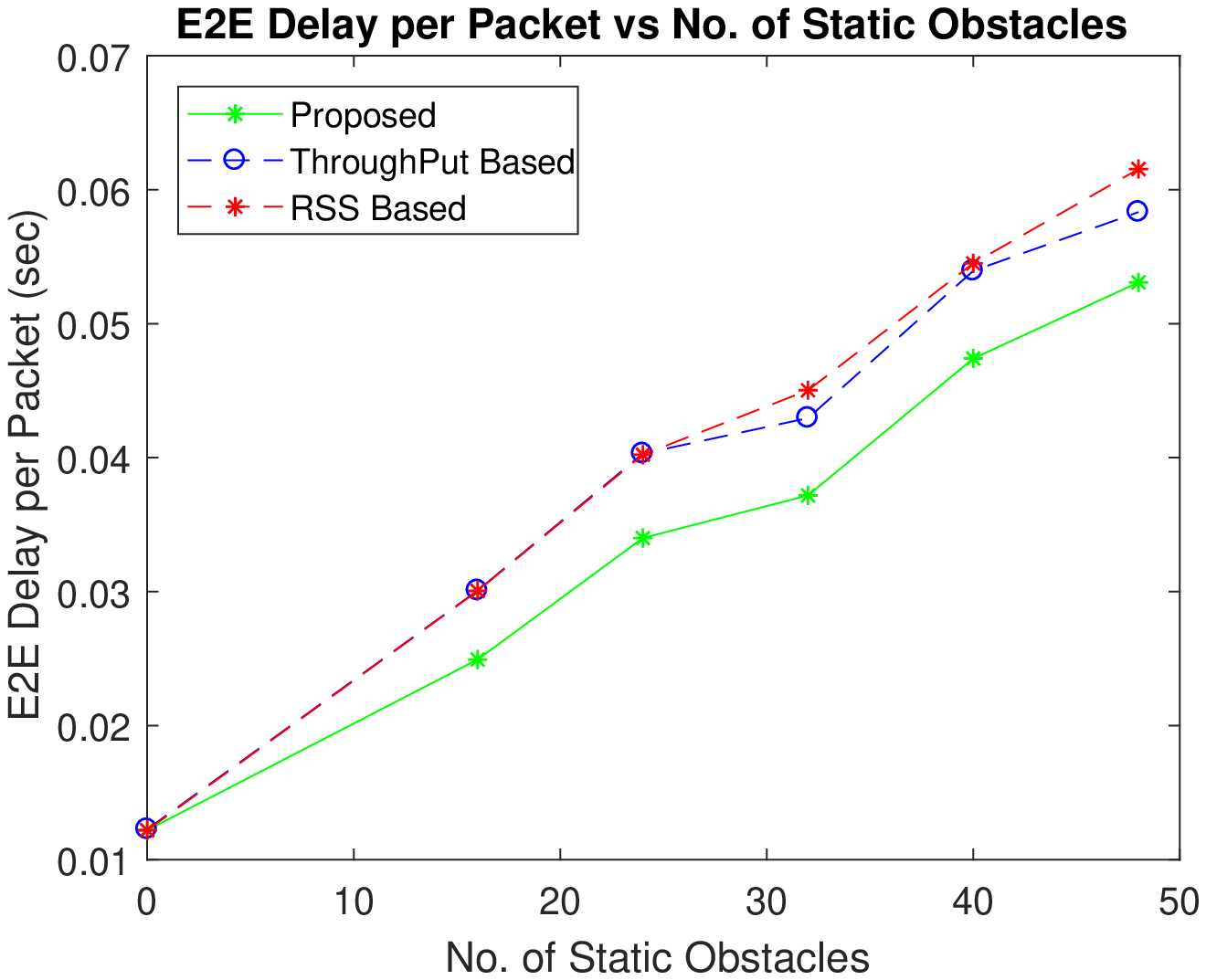}
			\caption{E2E delay per packet (in seconds) vs No. of static obstacles }
			\label{fig-3}
		\end{minipage}
	\end{figure*}

	\section{Conclusion}  \label{conclusion}
	Choosing a relay for D2D communication is a challenging task when the dynamic obstacles are present in the environment. This is because D2D channel quality is usually not directly visible to the BS. Dynamic obstacles cause unpredictable fluctuations to the D2D channel quality and hence they need to be learned from the channel statistics. We have modeled the problem of relay selection under the presence of dynamic obstacles as a finite horizon POMDP framework at each UE. This captures the uncertainty arising due to dynamic obstacles. Using this model, we have derived an optimal threshold policy for each UE that maps belief to action. We then derived a simple stationary policy which tells the UE to locally decide to either continue on the current relay link or to explore and switch to other relay link after successive ACK failures on the current relay link. This stationary policy  is simple and easy to implement. Through simulations we show that our approach captures the effects of dynamic obstacles and outperforms other state of art algorithms.

	\bibliography{ref} 
	\bibliographystyle{ieeetr}
	
	\onecolumn
	\appendix
	\noindent\textbf{Proof of proposition \ref{prop_1}:}

	We will prove this proposition for a general $b$ instead $b_l$. We will prove this by first showing that $J_l(b)$ is piece-wise linear and concave for each $l$ using induction. Then we prove our proposition.
	For time instant $(N-1)$, we have, $J_{N-1}(b)=(1-kb)C$, which is linear. 
	For time instant $(N-2)$, we have, $J_{N-2}(b)=\min\{C,(1-k b) C+ (1-(b q+(1-b) s) k) C\}$ which is also piece-wise linear and concave. 
	
	Assuming $J_{l+1}(b)$ is piece-wise linear and concave in $b$, we can say that for some suitable scalars, $\eta_1,\eta_2, \cdots, \eta_n$ and $\beta_1,\beta_2, \cdots, \beta_n$, $J_{l+1}(b)$ can be written as:
	\begin{equation}\label{eq_l+1_linear}
		J_{l+1}(b)=\min\{\eta_1+\beta_1b,\eta_2+\beta_2b,\cdots,\eta_n+\beta_nb\}.
	\end{equation}

	We can write $J_{l}(b)$=$\min\{C,A_l(b)\}$. Expanding it using equation \eqref{equation_A_R}, we get:
	\begin{multline}\label{eq_J_l}
		J_{l}(b)$=$\min\{C,(1-kb)C+(b q+(1-b) s) kJ_{l+1}(1)+(1-(b q+(1-b) s) k) J_{l+1}\bigg(\frac{(b q+ s (1-b)) (1-k)}{1-\{b q+s (1-b)\} k}\bigg)\}
	\end{multline}
	Substituting for $J_{l+1}$ from  equation \eqref{eq_l+1_linear} in above equation \eqref{eq_J_l}, we get:
	\begin{multline}\label{eq_J_l_expand}
		J_{l}(b)$=$\min\{C,(1-k b) C+(b q+(1-b) s) kJ_{l+1}(1)+(1-(bq+s(1-b))k) \min\{\eta_1+\beta_1\frac{(bq+ s(1-b))(1-k)}{1-\{bq+s(1-b)\}k},\\\eta_2+\beta_2\frac{(bq+ s(1-b))(1-k)}{1-\{bq+s(1-b)\}k},\cdots,\eta_n+\beta_n\frac{(bq+ s(1-b))(1-k)}{1-\{bq+s(1-b)\}k}\}\}
	\end{multline}
	We can further reduce above equation as:
	\begin{multline}\label{eq_J_l_linear}
		J_{l}(b)=\min\{C,(1-kb)C+(bq+(1-b)s) kJ_{l+1}(1)+\min\{\eta_1(1-\{bq+ s(1-b)\}k)+\beta_1(bq+ s(1-b))(1-k),\\\eta_2({1-\{bq+s(1-b)\}k})+\beta_2(bq+ s(1-b))(1-k),\cdots,\eta_n({1-\{bq+s(1-b)\}k})+\beta_n(bq+ s(1-b))(1-k)\}\}
	\end{multline}
	This is again piece-wise linear and concave in $b$. Thus the induction is complete. 
	
	Now we will show that $A_l(b)$ is also piece-wise linear and concave in $b$:
	\begin{multline} \label{eq_A_l_proof}
		A_l(b)=(1-k b) C+(b q +(1-b) s)k  J_{l+1}(1) + (1-(bq+(1-b)s)k)J_{l+1}(\Phi(b,0))
	\end{multline}
	The first term $(1-k b)C$ is linear in $b$. In the second term, $(b q +(1-b) s)k$ is linear in $b$ and $ J_{l+1}(1)$ is independent of $b$, hence overall $(1-k b)C+(bq +(1-b)s)k  J_{l+1}(1)$ is linear in $b$. Now we prove that $(1-(bq+(1-b)s)k)J_{l+1}(\Phi(b,0))$ is piece-wise linear in $b$ by expanding it using equation \eqref{eq_l+1_linear}, we get,
	\begin{multline}
		(1-(bq+(1-b)s)k)J_{l+1}(\Phi(b,0))=(1-(bq+(1-b)s)k)\min\{\eta_1+\beta_1\frac{(bq+ s(1-b))(1-k)}{1-\{bq+s(1-b)\}k},\eta_2+\\\beta_2\frac{(bq+ s(1-b))(1-k)}{1-\{bq+s(1-b)\}k},\cdots,\eta_n+\beta_n\frac{(bq+ s(1-b))(1-k)}{1-\{bq+s(1-b)\}k}\}
	\end{multline}
	We can reduce above to:
	\begin{multline}
		(1-(bq+(1-b)s)k)J_{l+1}(\Phi(b,0))=\min\{\eta_1(1-\{bq+ s(1-b)\}k)+\beta_1(bq+ s(1-b))(1-k),\\\eta_2(1-\{bq+ s(1-b)\}k)+\beta_2(bq+ s(1-b))(1-k),\cdots,\eta_n(1-\{bq+ s(1-b)\}k)+\beta_n(bq+ s(1-b))(1-k)\}.
	\end{multline}
	Since minimum of finite number of concave function is concave, $A_l(b)$ is piece-wise linear and concave in $b$ for all $l$.
	
	This proof is similar in spirit to an unsolved exercise given in \cite{dp_book}(chapter 5), however the DP and the estimator function of this paper are different from that given in the book. Hence we had to write a complete proof..

	\vspace{2em}
	\noindent\textbf{Proof of proposition \ref{prop_NEW}:}
	
	We will prove this proposition for a general $b$ for time instants $l-1$, $l$ and $l+1$ instead of $b_{l-1}$, $b_l$ and $b_{l+1}$. We will first prove  $J_l(b)\ge J_{l+1}(b)$, then we will use this to prove $A_l(b)\ge A_{l+1}(b)$. First we start for base case $l=N-1$ and the first term in recursion $l=N-2$: $J_{N-1}(b)=(1-kb)C$ and from equation \eqref{eq_j2_expand}, we have $J_{N-2}(b)=\min\{C,(1-kb)C+(1-(bq+(1-b)s)k)C\}$.
	We can easily see that $J_{N-2}(b)\ge J_{N-1}(b)$.
	We now prove it for first two terms of the recursion  $J_{N-2}(b)$ and $J_{N-3}(b)$. We can write $J_{N-3}(b)$ as:
	\begin{align}
		J_{N-3}(b)&= \min \{C, (1-kb)C + (bq+(1-b)s)kJ_{N-2}(1) + \notag 
		\quad (1-(bq+(1-b)s)k)J_{N-2}(\Phi(b,0))\} \\
		&\ge \min\{C, (1-kb)C + (bq+(1-b)s)kJ_{N-1}(1)  +  \notag 
		\quad (1-(bq+(1-b)s)k)J_{N-1}(\Phi(b,0))\} \\
		&=J_{N-2}(b)
	\end{align}
	Hence $J_{N-3}(b)\ge J_{N-2}(b)$. Similarly it proceeds for other $l$ and hence $J_{l}(b)\ge J_{l+1}(b)$. Now let us see this for $A_l(b)$ using previous proof for $J_{l}(b)$:
	\begin{align}
		A_{l}(b)&= (1-kb)C + (bq+(1-b)s)kJ_{l+1}(1) + (1-(bq+(1-b)s)k)J_{l+1}(\Phi(b,0))\\
		&\ge (1-kb)C + (bq+(1-b)s)kJ_{l+2}(1) + (1-(bq+(1-b)s)k)J_{l+2}(\Phi(b,0)) \\
		&=A_{l+1}(b)
	\end{align}
	Hence  $A_l(b) \ge A_{l+1}(b)$. This part is proved.
	
	To prove the second statement, We will prove it for general $b$ and $b'$ using induction. 
	
	We can see that $A_{N-2}(b)=(1-kb)C+(1-(bq+(1-b)s)k)C$ which is linear and non-increasing function in $b$. Let us assume this is true for $l+1$, such that $A_{l+1}(b)\ge A_{l+1}(b')$ for all $0\le b < b' \le 1$.  Let us now see for $l$:
	
	$A_l(b)=(1-k b) C + (b q +(1-b) s)k  J_{l+1}(1) + (1-(b q +(1-b) s)k) J_{l+1}(\Phi(b,0))$. We can rearrange the terms in this to write as:
	\begin{equation}
		A_l(b)= C+skJ_{l+1}(1)-kb(C-(q-s)J_{l+1}(1)) + (1-(b q +(1-b) s)k) J_{l+1}(\Phi(b,0))
	\end{equation}
	This equation can be further reduced as: 
	\begin{equation}
		A_l(b)= C+skJ_{l+1}(1)-kbC(1-(q-s)\min\{1,A_{l+1}(1)/C\}) + (1-(b q +(1-b) s)k)\min\{C,A_{l+1}(\Phi(b,0))\}
	\end{equation}
	
	In above equation, the term $\Phi(b,0) \le 1$ because it is a probability term. Function $\Phi(b,0)$ is increasing in $b$  for $q>s$ which is clear from equation \eqref{update_rule}. Hence using this and the induction hypothesis, we can say that $A_{l+1}(\Phi(b,0))$ is non-increasing and positive function in $b$. Also the term $(1-(b q +(1-b) s)k)$ is positive ($q>s$) and non-increasing in $b$. Hence we can say that the term $(1-(b q +(1-b) s)k)\min\{C,A_{l+1}(\Phi(b,0))$ is non-increasing in $b$. Now let us see the first term $C+skJ_{l+1}(1)-kbC(1-(q-s)\min\{1,A_{l+1}(1)/C\})$, here $C+skJ_{l+1}(1)$ is a constant and $kC(1-(q-s)\min\{1,A_{l+1}(1)/C\})$ is a positive quantity ($q>s$), hence we can say that $C+skJ_{l+1}(1)-kbC(1-(q-s)\min\{1,A_{l+1}(1)/C\})$ is non-increasing function in $b$. Sum of these two non-increasing functions is also a non-increasing function, hence, $A_l(b)$ is a non-increasing function in $b$.  Hence we can say that $A_l(b)\ge A_l(b')$ for all $0\le b < b' \le 1$.

	\vspace{2em}
	\noindent\textbf{Proof of theorem \ref{thm_1}:}
	
	First let us see the possible cases which exists for the given DP by fixing $l=N-2$ and at belief probabilities $0$ and $1$. For $b_{N-2}=0$, cost $C$ of exploration is always optimal and for $b_{N-2}=1$, if  $C<(1-k)C+(1-qk)C$ then continuing on that relay link $j$ for data transmission costs higher than exploring and switching on some other relay link. Similarly if  $C>(1-k)C+(1-qk)C$ then continuing on the  current link is the best option and exploring other links is never optimal. Hence we will see the following scenario where we can get the decision criterion for  choosing between exploring other links versus continuing on the same relay link. For this case choosing current link is optimal for $b_{N-2}=1$, if $C>(1-k)C+(1-q k)C$. If this condition is true, then there exists a scalar $\alpha_{N-2}$ with $0<\alpha_{N-2}<1$ that determines an optimal policy for the last period as: continue transmission on relay $j$ if $b_{N-2} \ge \alpha_{N-2}$ else stop transmission on relay $j$ and explore and switch to another better relay. 
	Since we are looking for the condition when the communication on given relay link can continue or not. When it cannot be continued then first action is chosen which stops the communication on the current relay link and exploration for new link begins.
	
	Using proposition \ref{prop_1} and proposition  \ref{prop_NEW}, we can say that the functions $y=C$ and $y=A_l(b_l)$ intersect at a single point and from the DP algorithm in equation \eqref{eq_dp}, we obtain that the optimal policy for each period is determined by the unique scalars $\alpha_l$ which are such that: $C=A_l(\alpha_l)$. Since we get a single point of intersection which decides the optimal choice for choosing an appropriate option, we can say that the optimal policy for the time period $l$ is given as:  continue transmission on relaying zone $j$ if $b_{l}\ge \alpha_l$, else stop transmission on relaying zone $j$ and explore and switch to another better relay. 
	
	Second part of this theorem: Using proposition \ref{prop_NEW}, $A_l(b_l)$ are monotonically non-increasing with respect to $l$. Hence we can say that sequence of $\alpha_l$  is also non-increasing with $l$ (using proposition \ref{prop_1} and \ref{prop_NEW}). 
	
\end{document}